\documentclass{article}
\usepackage[a4paper,left=3cm,right=3cm,top=3cm,bottom=3cm]{geometry}

\usepackage{amsmath,amsthm}
\usepackage{amssymb}
\usepackage{times}
\usepackage{bm}
\usepackage{natbib}
\usepackage{graphicx}
\usepackage[plain,noend]{algorithm2e}

\def\mA{\mathcal{A}}
\def\mB{\mathcal{B}}

\def\Prob{\mathrm{pr}}

\def\R{\mathbb{R}}

\def\Ind{\mathbb{I}}
\def\Real{\mathbb{R}}
\DeclareMathOperator*{\argmax}{arg\,max}

\def\Tr{\mathsf{T}}
\def\Cr{\mathsf{C}}

\def\mA{\mathcal{A}}

\newtheorem{lemma}{Lemma}
\newtheorem{corollary}{Corollary}

\usepackage{authblk}
\author[*]{Patrick Rubin-Delanchy}
\author[**]{Nicholas A Heard}
\affil[*]{Heilbronn Institute for Mathematical Research, University of Bristol, U.K.}
\affil[**]{Department of Mathematics, Imperial College London, U.K.}
\date{}
\title{A test for dependence between two point processes on the real line}

\begin{document}
\maketitle
\begin{abstract}
Many scientific questions rely on determining whether two sequences of event times are associated. This article introduces a likelihood ratio test which can be parameterised in several ways to detect different forms of dependence. A common finite-sample distribution is derived, and shown to be asymptotically related to a weighted Kolmogorov-Smirnov test. Analysis leading to these results also motivates a more general tool for diagnosing dependence. The methodology is demonstrated on data generated on an email network, showing evidence of information flow using only timing information. Implementation code is available in the R package `mppa'.
\end{abstract}

\textsc{Keywords:}Point process; Correlation; Triggering; Hypothesis test.

\section{Introduction}
Testing for dependence between two point processes is a long-standing statistical problem. When the two processes are on the real line, usually representing time, the points are usually interpreted as event times or, simply, events. Scientific questions then often revolve around identifying triggering behaviour (the occurrence of an event in $A$ temporarily increases the rate of events in $B$), correlation (the rate of events in $B$ is locally increased around events in $A$), inhibition, anti-correlation and so on.

Statistical methods to detect such effects have received decades of attention in the field of neurophysiology. The activity of a neuron is often recorded as a sequence of `spike' times, called a neuronal spike train, which is often treated as a realisation of a point process on the real line. Comparing trains that are simultaneously generated by different neurons can shed light on how they are connected and, more generally, how information is processed in the nervous system. The literature in this field is relatively mature, for example a very highly cited paper by \citet{perkel67} proposed to test for interaction on the basis of histograms of the times from $A$ to $B$ events. Since then, a number of model-based approaches were developed for this problem, notably in a series of papers by Brillinger \citep{brillinger1976identification, brillinger1979empirical, brillinger1988some, brillinger1988maximum, brillinger1992nerve}.

In a more general context \citet{ripley76, ripley77} introduced the so-called $K$-function to measure second-order dependence between point processes defined on a topological space. From this work a number of articles followed, typically motivated by ecological or biological applications, adapting ideas to two- or three-dimensional settings \citep{lotwick82, berman86}. \citet{doss89}, again motivated by a neurophysiological application, provided an interesting asymptotic analysis of the estimated $K$-function for point processes on the real line. 

The problem is now critical in the analysis of network data, for example traffic generated on a computer network, messages and other connections made on social networks, mobile communications, email networks, the web, collaboration networks (e.g. in academia, music or film) and more. Such data can often be represented as a graph with point processes (e.g. communication times) occurring on every edge (e.g. a pair of computers). In being able to diagnose dependence between events generated by edges or nodes, there is great potential to better understand information flow, discover new correlations and develop more accurate network models. Recent approaches include \citet{blundell12}, where email reciprocation is modelled using the mutually exciting point process models developed by \citet{hawkes71bis, hawkes71}, and \citet{perry13},  developing a framework for modelling point process interaction networks based on a version of Cox's proportional intensity model. 

This article seems to be the first to propose an exact (generalised) likelihood ratio test for association between two one-dimensional point processes. The user supplies a model for $B$, which captures its statistical behaviour under the null hypothesis, for example encapsulating any seasonality, changepoints or drift. Then the procedure tests for a multiplicative effect on the intensity of $B$ within a certain interval following or surrounding every event in $A$. Various different forms of dependence can be identified by simple modifications of the procedure. 

The contributions of this paper are very practical, for example the p-value is exact in finite samples and results are valid under non-homogeneous conditions. Asymptotically, the procedure provides the uniformly most powerful test and is related to a weighted Kolmogorov-Smirnov (K-S) test. A key insight is noticing a duality with finding a change-point in a homogeneous Poisson process (Lemma \ref{lem:uniform}), after which mathematical considerations are greatly simplified. A by-product of the lemma is a new diagnostic tool for analysing dependence between point processes.

\section{Testing for triggering behaviour}
\subsection{A fixed range interaction model}
\label{sec:triggering}
Let $A$ and $B$ be two simple point processes on the real line observed simultaneously from the first event time of $A$ up to an observation end time $L$. Neither process is explosive, so that the observed event times of $A$ form a finite set $\mA=\{a_1< \ldots < a_m\}$, $m\geq 1$, and the event times of $B$ form a finite set $\mB=\{b_1< \ldots < b_n\}$ where $b_1 \geq a_1$ and $n \geq 0$. Without loss of generality assume $a_1 = 0$.  We make no further assumptions on $A$, and treat its event times as given when modelling $B$. 

In this section we consider the problem of testing whether events in $A$ cause an increase in the intensity of $B$ (triggering behaviour) under the assumption that $B$ conditional on $A$ is a non-homogeneous Poisson process. Tests for other forms of dependence and relaxations of the Poisson assumption are considered in Section \ref{sec:extensions}. Given $A$, the process $B$ is assumed to have a deterministic, bounded, Lebesgue-measurable intensity function  $\lambda_B(t)$, 
\begin{equation}
\lambda_{B}(t) = \begin{cases} \lambda_1 r(t) & t-a(t) \leq \tau,\\
\lambda_2 r(t) & t-a(t) > \tau,
\end{cases} \quad t \in [0,L),
\label{eq:model}
\end{equation}
where $a(t)$ is the most recent event in $\mA$ occurring at or before $t$,  $\tau>0$, $\lambda_1 \geq \lambda_2 \geq 0$ are unknown parameters, and $r$ is a known bounded non-negative Lebesgue-measurable function satisfying $\int_0^L r(v) d v =1$.

Model \eqref{eq:model} leads to a test with a very straightforward interpretation: is the relative proportion of events within time $\tau$ of an event in $\mA$ higher than can be explained by $r$ alone? This is formalised as the following hypothesis test:
\begin{equation}
H_0: \lambda_1 = \lambda_2 \quad \text{versus} \quad H_1: \lambda_1 > \lambda_2.\label{eq:hyptest}\end{equation}
A natural test statistic is the generalised likelihood ratio
\begin{equation}
\frac{\sup\left\{\ell(B; \tau,\lambda_1,\lambda_2) \: : \:\tau>0, \:\lambda_1>\lambda_2 \geq 0 \right\}}{\sup\left\{\ell(B; \tau,\lambda_1,\lambda_2) \: : \:\tau>0, \:\lambda_1=\lambda_2 \geq 0 \right\}}, \label{eq:test}
\end{equation}
where $\ell$ is the likelihood of $B$ under model \eqref{eq:model}. Under $H_0$, $\tau$ has no real importance and, accordingly, $\ell(B; \tau,\lambda_1,\lambda_2)$ is functionally independent of $\tau$ when $\lambda_1 = \lambda_2$.

A common choice for $r$ will be the constant $1/L$, in which case $B$ is a homogeneous Poisson process under $H_0$. If the application makes this assumption unrealistic, a more informed choice of $r$ will not only bring the null behaviour of the test closer to its nominal distribution, derived later, but may also lead to a gain in power under the alternative, for example if $B$ appears to respond to $A$ despite being relatively inactive under $H_0$. In Section~\ref{sec:enron} a Bayesian estimate of $r$ is used. A more formal treatment of the case where $r$ is unknown is outlined in the Discussion.

\subsection{Computation of the test statistic}
\label{sec:explicit}
Let $\rho(X) = \int_X r(v) \:d v$, where $X$ is a Borel set. $\rho$ can be seen as an artificial measurement of time that compensates for the varying intensity of $B$ under $H_0$. In fact, $\rho$ is a probability measure on $[0,L)$. The likelihood of model (\ref{eq:model}) is \citep[p. 232]{daley03}
\begin{equation}
\ell(B;\tau,\lambda_1,\lambda_2) \propto \lambda_1^{K(\tau)}\exp[-\lambda_1 \rho\{\Tr(\tau)\}] \times \lambda_2^{n-K(\tau)} \exp[-\lambda_2 (1-\rho\{\Tr(\tau)\})], \label{eq:likelihood}
\end{equation}
where $\Tr(\tau) =  \{t: t-a(t) \leq \tau\}$ is the union of all triggered intervals and $K(\tau) = \#\{b_i \in \Tr(\tau)\}$ is the number of triggered events.

Let $u_1 \leq \cdots \leq u_n$ be the order statistics of $\rho\{\Tr(b_i - a(b_i))\}, i = 1, \dots, n$. The variable $u_k$ can be interpreted as the effective proportion of triggered time if $\tau$ is equal to the $k$th smallest response time. The maximum likelihood parameters $\hat{\tau}, \hat{\lambda}_1, \hat{\lambda}_2$ are found in Algorithm \ref{alg:lik}, for $n \geq 1$.

\begin{algorithm}[!h]
\caption{Computation of $T$ given $u_1, \dots, u_n$, for $n \geq 1$}
\label{alg:lik}
For $k = 1, \dots, n$, let
\[\ell_k=\left(\frac{k/n}{u_{k}}\right)^{k/n} \left(\frac{(n-k)/n}{1-u_{k}}\right)^{(n-k)/n}.\] 
Then, let
\[ \hat{k} = \argmax_{k = 1, \dots, n}\{\ell_k: u_k \leq k/n\}, \quad
\hat{\tau} = b_{\hat{k}}-a(b_{\hat{k}}), \quad \hat{\lambda}_1 = \hat{k}/u_{\hat{k}}, \quad \hat{\lambda}_2 = (n-\hat{k})/(1-u_{\hat{k}}). \]
Return $T=\ell_{\hat{k}}$.
\end{algorithm}
In Algorithm \ref{alg:lik}, because $k=n$ satisfies $u_k \leq k/n$, the estimate $\hat{k}$ is always defined. The maximum of \eqref{eq:likelihood} is a monotonic function of $\ell_{\hat{k}}$. This can be shown by a straightforward argument, given in the supplementary material. Since the number of events in $B$ can be equally well explained under the null as under the alternative, it is natural to condition on the value of $n$. The denominator of \eqref{eq:test} is then constant, therefore any monotonic function of the numerator can be used a test statistic, and we use $T=\ell_{\hat{k}}$.

\subsection{A simple reformulation}
The following lemma establishes a duality between the hypothesis test set out in \eqref{eq:hyptest} and the problem of testing for a Poisson process change-point, and is the key observation of this article. The proof is given in the appendix.
\begin{lemma}
\label{lem:uniform}
$u_1, \dots, u_n$ are the event times of a Poisson process $U(x)$ on $[0,1)$ with a change-point in its intensity,
\begin{equation}
\lambda(x) = \begin{cases} \lambda_1 & x \leq \rho\{\Tr(\tau)\}, \label{eq:equivalence}\\
\lambda_2 & x > \rho\{\Tr(\tau)\}, \end{cases}
\end{equation}
for $x \in [0,1)$.
\end{lemma}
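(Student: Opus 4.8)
\emph{Proof proposal.} The plan is to exhibit a deterministic time change that turns $B$ into the process $U$, and then read off the conclusion from the mapping theorem for Poisson processes. There is nothing to prove when $n=0$, so suppose the points of $B$ are generated by the intensity in \eqref{eq:model}, and write $\nu(\rd t)=\lambda_B(t)\,\rd t$ for the associated intensity measure on $[0,L)$; it is finite because $\lambda_1,\lambda_2<\infty$ and $\rho$ is a probability measure, and it is absolutely continuous with respect to $\rho$.

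First I would collect the properties of the scalar function $F(s)=\rho\{\Tr(s)\}$, $s\ge 0$. The sets $\Tr(s)=\{t:t-a(t)\le s\}$ are nested and increasing in $s$, so $F$ is non-decreasing; since $r$ is bounded and $\Tr(s)$ is a finite union of intervals whose endpoints move continuously with $s$, the function $F$ is continuous. Also $\Tr(0)=\mA$ is finite, hence $\rho$-null, so $F(0)=0$, while $\Tr(s)=[0,L)$ as soon as $s$ exceeds the largest spacing of $\mA$ (including the spacing up to $L$), so $F(s)=1$ for all large $s$. For $x\in[0,1)$ put $s_x=\sup\{s\ge 0:F(s)\le x\}$; monotonicity and continuity of $F$ then give $F(s_x)=x$ and $\{s:F(s)\le x\}=[0,s_x]$, even when $F$ has flat stretches.

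Next I would introduce the time-change map $T:[0,L)\to[0,1)$, $T(t)=F\{t-a(t)\}=\rho\{\Tr(t-a(t))\}$, so that by construction the values $T(b_1),\dots,T(b_n)$ are exactly $\rho\{\Tr(b_i-a(b_i))\}$, with order statistics $u_1,\dots,u_n$. The key identity is
\[
T^{-1}\big([0,x]\big)=\{t:t-a(t)\le s_x\}=\Tr(s_x),\qquad x\in[0,1),
\]
obtained directly from $\{s:F(s)\le x\}=[0,s_x]$. Hence $\rho\{T^{-1}([0,x])\}=F(s_x)=x$, i.e.\ $T$ transports $\rho$ to Lebesgue measure on $[0,1)$. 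I would then evaluate the image of $\nu$ on each half-line by splitting $\Tr(s_x)$ at $\tau$: if $s_x\le\tau$ then $\Tr(s_x)\subseteq\Tr(\tau)$ carries intensity $\lambda_1 r$, giving $\nu\{\Tr(s_x)\}=\lambda_1\rho\{\Tr(s_x)\}=\lambda_1 x$; if $s_x>\tau$ then $\Tr(s_x)=\Tr(\tau)\cup(\Tr(s_x)\setminus\Tr(\tau))$ carries intensities $\lambda_1 r$ and $\lambda_2 r$ respectively, giving $\nu\{\Tr(s_x)\}=\lambda_1\rho\{\Tr(\tau)\}+\lambda_2\big(x-\rho\{\Tr(\tau)\}\big)$. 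Differentiating this piecewise-linear distribution function in $x$ produces precisely the intensity \eqref{eq:equivalence}.

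To finish, I would invoke the mapping theorem: $B$ is Poisson on $[0,L)$ with the finite intensity measure $\nu$, so the push-forward $T(B)=\{T(b_i)\}$ is Poisson on $[0,1)$ with image measure $\nu\circ T^{-1}$; the computation above shows this image measure is absolutely continuous, so it has no atoms, $T(B)$ is almost surely simple, and its event times listed in increasing order are $u_1<\dots<u_n$ — which is the claimed process $U$. The step I expect to require the most care is the exact identity $T^{-1}([0,x])=\Tr(s_x)$: it hinges on $F$ being continuous (this is where absolute continuity of $\rho$, equivalently boundedness of $r$, is used) and on choosing the generalised inverse $s_x$ so that flat portions of $F$ are handled correctly; everything after that is bookkeeping.
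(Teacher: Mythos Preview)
Your argument is correct. The map $T(t)=\rho\{\Tr(t-a(t))\}$ is exactly the composition of the two maps the paper uses, so the underlying idea---transport $B$ through the response time $t-a(t)$ and then through the ``clock'' $\rho\{\Tr(\cdot)\}$---is the same. The packaging, however, is genuinely different: the paper first forms the intermediate response-time process $K(y)=\#\{b_i\in\Tr(y)\}$ by superposing the inter-$A$ segments, computes its compensator $H$, and then invokes the random time change theorem to deduce that $K(H^{-1}(\cdot))$ is unit-rate Poisson, from which the piecewise intensity of $U(x)=K(\mu^{-1}(x))$ is read off. You skip the intermediate process entirely and appeal once to the mapping theorem for Poisson processes, computing the pushforward measure $\nu\circ T^{-1}$ directly via the identity $T^{-1}([0,x])=\Tr(s_x)$. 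Your route is a little more economical and makes the measure-theoretic content (absolute continuity of the image measure, hence simplicity of $U$) explicit; the paper's route has the advantage of displaying the response-time process $K$, which is itself of some interpretive interest. Two small remarks: the continuity of $F$ really comes from $\rho$ being absolutely continuous with respect to Lebesgue (integrability of $r$ suffices, boundedness is not the essential point); and your opening clause ``there is nothing to prove when $n=0$'' is unnecessary, since the mapping-theorem argument already covers that case.
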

Lemma \ref{lem:uniform} and its proof provide a number of insights into the testing problem. First, the statistic $T$ is also the generalised likelihood ratio test for model \eqref{eq:equivalence} against a homogeneous Poisson null hypothesis,
\begin{equation*}
\frac{\sup\left\{\ell(U; \tau,\lambda_1,\lambda_2) \: : \:\tau>0, \:\lambda_1>\lambda_2 \geq 0 \right\}}{\sup\left\{\ell(U; \tau,\lambda_1,\lambda_2) \: : \:\tau>0, \:\lambda_1=\lambda_2 \geq 0 \right\}}. 
\end{equation*}
Second, conditional on $n$, the variables $u_1, \ldots, u_n$ are ordered uniform random variables under the null hypothesis, whereas under the alternative they should be, loosely speaking, more concentrated towards $0$. Hence the $u_i$ provide a more general tool for diagnosing dependence. For example they can be used in a goodness-of-fit test against uniformity, e.g. Fisher's method \citep{Fisher48}, or in a more visual way, e.g. a plot of the empirical cumulative distribution function of $u_i$ compared to $y=x$. Finally, Lemma \ref{lem:uniform} makes it relatively straightforward to determine conditions for consistency and the asymptotic optimality of the test, given below and proven in the appendix. To give a more compact statement, we have temporarily set $a_{m+1}=L$ below.
\begin{corollary}\label{cor:consistence}
Suppose that $0 < \tau < \max(a_{i+1}-a_i: i = 1, \ldots, m)$ and $r$ is positive in the neighbourhood of a change, i.e., an open interval containing a point $t$ satisfying $t-a(t) = \tau$. In the asymptotic regime $\lambda_1,\lambda_2 \rightarrow \infty$ with $\lambda_1/\lambda_2=c>1$, the estimate $\hat \tau$ is consistent and $T$ becomes a monotonic function of the true likelihood ratio conditional on $n$.
\end{corollary}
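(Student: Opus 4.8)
The plan is to use Lemma~\ref{lem:uniform} to recast the statement in terms of order statistics of an i.i.d.\ sample from a fixed, scale-free distribution, then to combine a Glivenko--Cantelli argument with a standard analysis of the argmax, and finally to read off the monotone-likelihood-ratio claim from exponential-family theory. Write $p_0=\rho\{\Tr(\tau)\}$ and $\Lambda=\lambda_1 p_0+\lambda_2(1-p_0)$. Under the hypotheses $0<p_0<1$ (by the gap condition a point $t$ with $t-a(t)=\tau$ lies strictly inside an inter-event interval, and $r>0$ on both sides of it, so $\Tr(\tau)$ and its complement both have positive $\rho$-measure); hence $\Lambda\to\infty$, and since $n\sim\mathrm{Poisson}(\Lambda)$ we get $n\to\infty$. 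By Lemma~\ref{lem:uniform}, conditional on $n$ the $u_i$ are the order statistics of an i.i.d.\ sample from
\[
F_0(x)=\frac{\lambda_1(x\wedge p_0)+\lambda_2(x-p_0)_{+}}{\Lambda}=\frac{c\,(x\wedge p_0)+(x-p_0)_{+}}{c\,p_0+1-p_0},\qquad x\in[0,1),
\]
a continuous, strictly increasing, piecewise-linear distribution function depending only on $c$ and $p_0$. The Glivenko--Cantelli theorem then gives $\sup_x|F_n(x)-F_0(x)|\to0$ almost surely, whence $u_{\lfloor nq\rfloor}\to F_0^{-1}(q)$ uniformly on compact subsets of $(0,1)$.

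For the consistency, with $k=\lfloor nq\rfloor$ this yields $\log\ell_k\to g(q)$ uniformly on compact subsets of $(0,1)$ (the exponents $k/n$ and $(n-k)/n$ in Algorithm~\ref{alg:lik} already supply the $n^{-1}$-normalisation), where $g(q)=q\log\frac{q}{F_0^{-1}(q)}+(1-q)\log\frac{1-q}{1-F_0^{-1}(q)}$ is the binary Kullback--Leibler divergence between $q$ and $F_0^{-1}(q)$; thus $g>0$ on $(0,1)$ and $g$ vanishes at $0$ and $1$. Since $F_0^{-1}$ is affine on $[0,q^\ast]$ and on $[q^\ast,1]$, with $q^\ast=F_0(p_0)=\lambda_1 p_0/\Lambda$, substitution reduces $g'(q)$ to $\log z-1+z^{-1}$ for $q<q^\ast$ and to $\log w+1-w$ for $q>q^\ast$, where $z=(\lambda_1-\Lambda q)/\{\Lambda(1-q)\}$ and $w=\Lambda q/\{\Lambda q-(\lambda_1-\lambda_2)p_0\}$ both exceed $1$ throughout (using $\lambda_2<\Lambda<\lambda_1$). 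As $t\mapsto\log t-1+t^{-1}$ is positive and $t\mapsto\log t+1-t$ is negative on $(1,\infty)$, $g$ is strictly increasing on $(0,q^\ast)$ and strictly decreasing on $(q^\ast,1)$, so $q^\ast$ is its unique maximiser. The feasibility constraint $u_k\le k/n$ in Algorithm~\ref{alg:lik} is asymptotically inactive because $F_0(x)\ge x$, so a standard argmax argument -- uniform convergence on $[\varepsilon,1-\varepsilon]$, the fact that $\log\ell_k\to0$ along any sequence with $k/n\to0$ or $1$, and $g(q^\ast)>0$ -- forces $\hat k/n\to q^\ast$, hence $u_{\hat k}\to F_0^{-1}(q^\ast)=p_0$, that is $\rho\{\Tr(\hat\tau)\}\to\rho\{\Tr(\tau)\}$.

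To pass to $\hat\tau\to\tau$, note that $\tau'\mapsto\rho\{\Tr(\tau')\}$ is continuous (as $r$ is integrable) and, by the two hypotheses, strictly increasing on an open interval around $\tau$; hence $\tau$ is the unique preimage of $\rho\{\Tr(\tau)\}$ under this map, and since $\hat\tau$ always lies in the compact set $[0,\max_i(a_{i+1}-a_i)]$, every subsequential limit of $\hat\tau$ equals $\tau$. For the remaining assertion, fixing $\tau$ at its true value turns \eqref{eq:hyptest}, conditional on $n$, into a one-sided test for the success probability $q^\ast=\lambda_1 p_0/\Lambda\in[p_0,1)$ of the binomial count $K(\tau)\mid n$; there both the generalised likelihood ratio and every Neyman--Pearson ratio against a fixed alternative are non-decreasing in $K(\tau)$ and strictly increasing once $K(\tau)/n$ exceeds $p_0$ -- which happens with probability tending to one, as $K(\tau)/n\to q^\ast>p_0$ -- and hence are asymptotically monotone functions of one another. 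By the consistency of $\hat\tau$, $T=\ell_{\hat k}$ agrees, conditional on $n$ and up to a negligible perturbation, with this generalised likelihood ratio at the true $\tau$, so $T$ becomes a monotone function of the true likelihood ratio conditional on $n$; with the Neyman--Pearson lemma, and because $T$ involves no unknown parameters, this is the asymptotic uniform optimality announced in the introduction.

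I expect the main obstacle to be the analysis of $g$ in the second step -- the algebraic reduction of $g'$ to the two one-parameter forms and the verification that $q^\ast$ is its unique maximiser -- together with, for the final clause, a quantitative version of the statement that replacing the true $\tau$ by $\hat\tau$ perturbs $T$, and hence the induced test, only at a negligible scale; this is precisely what is needed to make ``$T$ becomes a monotonic function of the true likelihood ratio conditional on $n$'' and the ensuing optimality rigorous.
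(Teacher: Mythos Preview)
Your argument is correct and reaches the same conclusions, but by a different route from the paper. The paper's proof is considerably shorter: it observes that, conditional on $n$, the $u_i$ are i.i.d.\ from a two-piece uniform density with change-point $w=\rho\{\Tr(\tau)\}$, then simply \emph{cites} \citet{chernoff1956estimation} for consistency of the unconstrained maximum-likelihood change-point $\hat w$, and deduces consistency of $\hat\alpha,\hat\beta$ by a misclassification-counting argument. It does not analyse the limiting profile $g(q)$ at all. For the second claim it writes $T=\hat\alpha^{K(\hat\tau)/n}\hat\beta^{(n-K(\hat\tau))/n}=\alpha^{K(\tau)/n}\beta^{(n-K(\tau))/n}+o_p(1)$ directly, and notes that the leading term is monotone in $K(\tau)$, as is the true conditional likelihood ratio.

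Your approach, by contrast, is self-contained: you replace the Chernoff citation with a Glivenko--Cantelli argument and an explicit analysis of the Kullback--Leibler profile $g$, reducing $g'$ to the elementary forms $\log z-1+z^{-1}$ and $\log w+1-w$ to locate the unique maximiser $q^\ast$. This buys you a proof that does not depend on an external 1956 reference and makes the mechanism transparent; the price is the algebra you flag as the main obstacle, though your reductions check out. Your treatment of the passage $\rho\{\Tr(\hat\tau)\}\to\rho\{\Tr(\tau)\}\Rightarrow\hat\tau\to\tau$ via monotonicity and compactness is also more explicit than the paper's one-line remark. For the final clause, your exponential-family/Neyman--Pearson framing is more elaborate than needed---the paper's direct $o_p(1)$ expansion is simpler and yields the same conclusion---and the residual looseness you identify (quantifying the perturbation from $\hat\tau$ to $\tau$) is present in the paper's proof as well.
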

\subsection{Finite sample p-value}\label{sec:exact}
The p-value is one if $n=0$. If $n\geq 1$ the proposed test has a p-value $p = \Prob(T \geq t \mid n) = 1-F_n(t)$, where $t$ is the observed test statistic, $T$ is a replicate of $t$ under the null hypothesis, and $F_n$ is the cumulative distribution function of $T$ under the null hypothesis conditional on $n$. For $n\geq 1$, the p-value can be computed explicitly using
\begin{equation*}
F_n(t) = \Prob[u_1 \geq o_1, \dots, u_n \geq o_n],
\end{equation*}
where $o_i$ is the solution for $x \in (0, i/n]$ of 
\begin{equation}
t=\left(\frac{i/n}{x}\right)^{i/n} \left(\frac{(n-i)/n}{1-x}\right)^{(n-i)/n}, \label{eq:o}
\end{equation}
which is obtained numerically. Various recursive formulas exist for computing the joint survival probability of $n$ ordered uniform variables, although many are unsuitable for computation because they involve differences of very large numbers (so-called catastrophic cancellation). A safe option is the $O(n^2)$ formula in \citet{noe68}, as corrected in \citet{noe72}. This recursion is implemented in the R package corresponding to this article, `mppa'. \citet{worsley88} proposed a similar idea in the context of testing for a changepoint in the hazard rate of survival times.
\section{Extensions}\label{sec:extensions}
\subsection{Time-limited $\tau$}\label{sec:time_limit}
It may be desirable to limit $\tau$ to a maximum range, $\tau_{\max}$ say. This avoids wasting power on testing for long-term dependence if $\tau$ is expected to be small. In this case the test statistic is computed as follows. Let $u_{\max}=\rho\{\Tr(\tau_{\max})\}$. Modify Algorithm~\ref{alg:lik} so that, if no $u_k\leq u_{\max}$, the returned value is $1$. Otherwise, replace  $\hat{k}$ with $\hat{k} = \argmax_{k = 1, \dots, n}\{\ell_k: u_k \leq \min(k/n, u_{\max})\}$.

Set the p-value of this test to be $p=1$ if $T=1$ (to be conservative). Otherwise compute $o_1, \dots, o_n$ as in \eqref{eq:o}, and calculate 
\[p=1-\Prob[u_1 \geq \min(o_1, u_{\max}), \dots, u_n \geq \min(o_n, u_{\max})].\]

\subsection{Testing for correlation}
The test can be modified to identify correlation, defined here to be an increased rate of events in $B$ surrounding events in $A$. We do not find it problematic that a test for correlation analysing $A$ conditional on $B$, instead of $B$ conditional $A$, could give a different result, because we see the two approaches as answering slightly different questions.

Relax the constraint $a_1=0$. Let $\tilde a(t)$ be the closest event to $t$ in $\mA$, which can now occur before or after $t$ and then replace $t-a(t) \leq \tau$ by $|t-\tilde a(t)| \leq \tau$ in \eqref{eq:model}. A generalised likelihood ratio test of $H_0: \lambda_1 = \lambda_2$ versus $H_1: \lambda_1>\lambda_2$ is obtained as follows. Let $\Cr(\tau) = \{t: |t-\tilde a(t)| \leq \tau\}$ and let $v_1 \leq \cdots \leq v_n$ be the order statistics of $\rho\{\Cr(b_i - \tilde a(b_i))\}$, for $i = 1, \dots, n$. Compute $T$ by inputting $v_1, \dots, v_n$  to Algorithm~\ref{alg:lik}. By a similar argument to the proof of Lemma \ref{lem:uniform}, we find that $v_1,\ldots, v_n$ are the event times of a point process following model \eqref{eq:equivalence}. Thus $T$ has distribution $F_n$ under $H_0$, conditional on $n$.

Further examples of how the procedure can be modified to detect other forms of dependence are given in the supplementary material.

\subsection{Independence conditional on $n$}
Relaxing the non-homogeneous Poisson assumption, suppose that $B$ can be generated by drawing $n$ from some distribution and then placing the $n$ event times independently according to some probability measure over $[0,L)$ with density
\begin{equation}
\label{eq:density_model}
d_{B}(t) \propto \begin{cases} \lambda_1 r(t) & t-a(t) \leq \tau,\\
\lambda_2 r(t) & t-a(t) > \tau,
\end{cases}
\end{equation}
where $r$ is as before a bounded non-negative Lebesgue measurable function satisfying $\int_0^L r(v) d v =1$, thereby defining $\rho$. If $n$ has a Poisson distribution this model reduces to \eqref{eq:model}. The hypothesis test $H_0: \lambda_1 = \lambda_2$ versus $H_1: \lambda_1 > \lambda_2$ can be evaluated through a similar generalised likelihood ratio test: we first compute $u_1\leq \dots \leq u_n$ as the order statistics of $\rho\{\Tr(b_i - a(b_i))\}, i = 1, \dots, n$ and then $T$ using Algorithm~\ref{alg:lik}. By  straightforward modifications to the proof of Lemma \ref{lem:uniform}, we find:
\begin{lemma}
\label{lem:density}
Under model \eqref{eq:density_model}, conditional on $n$, $\{u_i\}$ is a set of independent and identically distributed random variables with support on $[0,1)$ and density
\begin{equation*}
d(x) \propto \begin{cases} \lambda_1 & x \leq \rho\{\Tr(\tau)\},\\
\lambda_2 & x > \rho\{\Tr(\tau)\}. \end{cases} 
\end{equation*}
\end{lemma}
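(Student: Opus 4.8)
The plan is to mirror the proof of Lemma~\ref{lem:uniform}, replacing ``non-homogeneous Poisson process with intensity'' by ``i.i.d.\ sample from a density'' throughout. Conditional on $n$, the unordered collection $\{b_1,\dots,b_n\}$ is by construction an i.i.d.\ sample from the probability density $d_B$ of \eqref{eq:density_model}. Introduce the map $\phi\colon[0,L)\to[0,1)$ given by $\phi(t)=\rho\{\Tr(t-a(t))\}$; this is exactly the transformation sending each event $b_i$ to the quantity whose order statistics define $u_1\le\cdots\le u_n$. Since $\phi$ is applied to each point separately, $\{\phi(b_1),\dots,\phi(b_n)\}$ is an i.i.d.\ sample from the push-forward law $\phi_{*}d_B$, and $\{u_i\}$ is precisely its set of order statistics; it therefore suffices to show that $\phi_{*}d_B$ equals the density $d$ in the statement.

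To identify $\phi_{*}d_B$, write $g(t)=t-a(t)$ for the response time and $F_g(y)=\rho\{\Tr(y)\}=\rho(\{s:g(s)\le y\})$, so that $\phi=F_g\circ g$. The crucial sub-claim is that $F_g$ is continuous, equivalently that $g$ carries no $\rho$-atoms: for each $y$ the level set $\{t:t-a(t)=y\}$ meets each interval $[a_i,a_{i+1})$ in at most one point, hence is countable and Lebesgue-null, hence $\rho$-null since $\rho\ll\mathrm{Leb}$. By the probability integral transform, $\phi$ then pushes $\rho$ (a probability measure on $[0,L)$) forward to the uniform distribution, i.e.\ to Lebesgue measure on $[0,1)$; this is the same computation that underlies Lemma~\ref{lem:uniform} and may simply be quoted from its proof.

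Now push the model density forward. Writing $c=\rho\{\Tr(\tau)\}$ and absorbing normalising constants into $\propto$, we have $d_B(t)\,\rd t\propto\bigl[\lambda_1\Ind\{g(t)\le\tau\}+\lambda_2\Ind\{g(t)>\tau\}\bigr]\,\rho(\rd t)$. Because $F_g$ is non-decreasing and $\rho$-continuous, the events $\{g(t)\le\tau\}$ and $\{\phi(t)\le c\}$ differ only within $\{\phi(t)=c\}$, a $\rho$-null set; replacing one indicator by the other and using $\phi_{*}\rho=\mathrm{Leb}$ on $[0,1)$ gives, for any bounded measurable $h$,
\begin{align*}
\int h(x)\,(\phi_{*}d_B)(\rd x)
&\propto \int h(\phi(t))\bigl[\lambda_1\Ind\{\phi(t)\le c\}+\lambda_2\Ind\{\phi(t)>c\}\bigr]\,\rho(\rd t)\\
&= \int_0^1 h(x)\bigl[\lambda_1\Ind\{x\le c\}+\lambda_2\Ind\{x>c\}\bigr]\,\rd x,
\end{align*}
so $\phi_{*}d_B(x)\propto\lambda_1$ for $x\le c$ and $\propto\lambda_2$ for $x>c$, which is the asserted density $d$.

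The only genuinely delicate point is the measure-theoretic bookkeeping in the middle: checking that $g$ has no $\rho$-atoms (so $F_g$ is continuous and the probability integral transform applies) and that $\{g\le\tau\}$ agrees with $\{\phi\le c\}$ up to a $\rho$-null set. This is what absorbs the regions where $r\equiv0$, on which $\phi$ fails to be injective and $F_g$ is flat, together with the exceptional times where $t-a(t)=\tau$; none of these affect $\phi_{*}d_B$. Everything else is a routine transcription of the argument for Lemma~\ref{lem:uniform}.
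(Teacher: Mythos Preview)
Your proof is correct and is exactly the ``straightforward modification'' the paper alludes to but does not spell out: the paper gives no separate argument for Lemma~\ref{lem:density}, only a pointer back to the proof of Lemma~\ref{lem:uniform}. Your translation of that proof into the i.i.d.\ setting---replacing the time-change theorem for point processes by the probability integral transform applied to $F_g(y)=\rho\{\Tr(y)\}$, and computing the push-forward of $d_B$ under $\phi=F_g\circ g$---is precisely the natural analogue, and you have handled the one measure-theoretic subtlety (continuity of $F_g$ and the $\rho$-null discrepancy between $\{g\le\tau\}$ and $\{\phi\le c\}$) cleanly.
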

\noindent From this we establish that $T$ given $n$ also has null distribution $F_n$ under model \eqref{eq:density_model}.

\subsection{Random time transformation}\label{sec:tt}
A much wider class of point processes can be conceived by allowing the intensity of $B$ to be dependent on past information. Under some regularity conditions, $B$ has a continuous compensator $\Lambda(t)$ and a conditional intensity defined via $\Lambda(t) = \int_0^{t} \lambda(x) dx$ \cite[p.358, p.367, p.390]{daley07}. Lemma \ref{lem:uniform} and the results that follow do not continue to hold if $r$ is replaced by $\lambda$. This is illustrated in a simple example, drawn out in the supplementary material, where $B$ is a point process with just one point uniformly distributed on $[0,L)$, and $\mA = \{0, L/2\}$. On the other hand, probabilistic structure due to the conditional intensity of $B$ can be removed using the random time transformation theorem \citep[p.421]{daley07}: if a point process $X$ is non-terminating (there are infinite events as $t \rightarrow \infty$) with continuous compensator $C$ then the process $X\{C^{-1}(t)\}, t \in [0,\infty)$ is a homogeneous Poisson process with unit rate, where $F^{-1}(y) = \inf \{x: F(x)\geq y\}$ for a non-decreasing function $F$.

Thus if $\Lambda$ is known and continuous under the null hypothesis then $\tilde B(t) = B(\Lambda^{-1}(t)), t \in [0, \Lambda(L))$ is a stopped unit-rate homogeneous Poisson process. As an alternative hypothesis where events in $A$ trigger the intensity of $B$, we might propose the following model for the intensity of $\tilde B$,
\begin{equation*}
\lambda_{\tilde B}(t) = \begin{cases} \lambda_1/\Lambda(L) & t-\tilde a(t) \leq \tau,\\
\lambda_2/\Lambda(L) & t - \tilde a(t) > \tau, 
\end{cases}\quad t \in [0,\Lambda(L)),
\end{equation*}
where $\tilde a(t)$ is the most recent event in $\tilde \mA$ occurring at or before $t$, and $\tilde \mA$ is the set of transformed event times of $\tilde A(t) = A(\Lambda^{-1}(t))$.
The dependence of the stopping-time $\Lambda(L)$ on $B$ under $H_0$ makes inference more complicated. In particular Lemma~\ref{lem:uniform} and the p-value computed in Section~\ref{sec:exact} no longer hold exactly, but may be sufficiently close approximations for practical use. Progress is possible if we allow $L$ to be random, but this seems contrived.

\section{Asymptotic distribution}\label{sec:asymp}
\begin{figure}[htp]
  \centering
  \includegraphics[width=7cm]{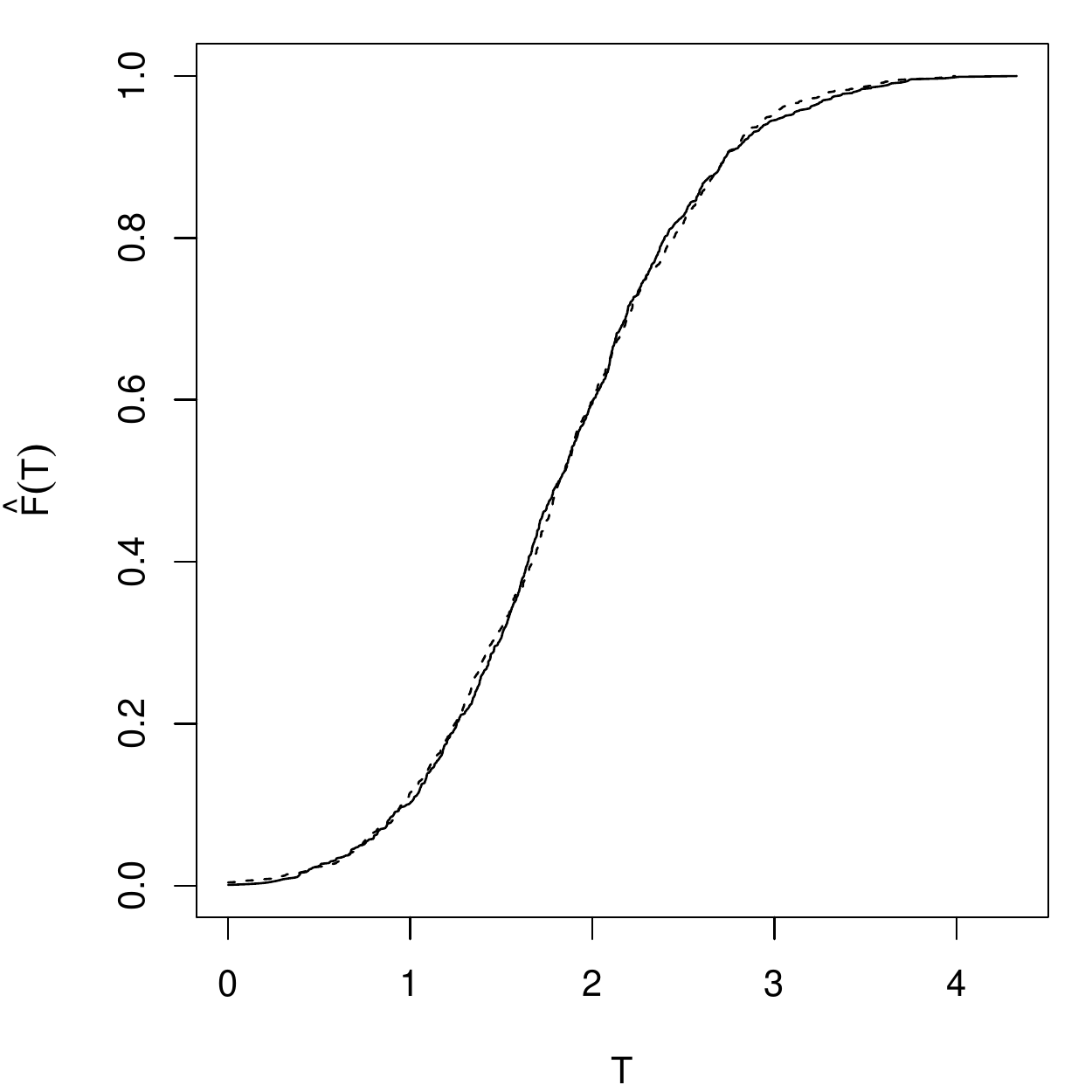}
  \caption{Empirical cumulative distribution function based on 1000 replicates of $(2n)^{1/2}[T_{[\gamma_1, \gamma_2]}-1]^{1/2}$ (solid line) and the weighted upper K-S test (dashed line), with $n=1000$ and $[\gamma_1, \gamma_2]=[.01, .99]$.}
  \label{fig:asymp}
\end{figure}

In this section we demonstrate an asymptotic connection between our test and a weighted Kolmogorov-Smirnov (K-S) test. The weight function needed is in fact one of the most frequently used for K-S tests, for example in \cite{anderson52} or \citet{chicheportiche12}.

Let $y_1\leq \cdots \leq y_n$ be the order statistics of independent replicates of an absolutely continuous random variable $Y$ with support $\mathbb{Y} \subseteq \Real$ and distribution function $F$, estimated as
\[\hat{F}(y) = \frac{1}{n} \sum_{i=1}^n \Ind[y_i \leq y],\]
where $\Ind$ is the indicator function. The generalised K-S test \citep{anderson52} is commonly used to test whether $\hat F$ is consistent with $F$,
\[G=\sup\left\{n^{1/2} |\hat{F}(y)-F(y)|[\phi\{F(y)\}]^{1/2}: y \in \mathbb{Y}\right\},\]
for some weight function $\phi(x)\geq 0, x \in [0,1]$. A one-sided, interval-restricted version of the above is
\begin{equation*}
G^+_{[\gamma_1, \gamma_2]}=\max\left\{n^{1/2} [\hat{F}(y_i)-F(y_i)] [\phi\{F(y_i)\}]^{1/2}: F(y_i) \leq \hat{F}(y_i); \gamma_1 \leq F(y_i) \leq \gamma_2\right\}, 
\end{equation*}
for $0<\gamma_1 < \gamma_2 < 1$, defining $G^+_{[\gamma_1, \gamma_2]}=0$ if the above set is empty. 

Analogously, let
\[T_{[\gamma_1, \gamma_2]} = \max\{\ell_i: u_i \leq i/n; \gamma_1 \leq u_i \leq \gamma_2\},\]
now setting $T_{[\gamma_1, \gamma_2]}=1$ if the above set is empty. We next show, by some slightly heuristic arguments, that $G^+_{[\gamma_1, \gamma_2]}
$ and $(2n)^{1/2} (T_{[\gamma_1, \gamma_2]}-1)^{1/2}$ have the same limiting distribution under the null hypothesis if $\phi(x) = \{x (1-x)\}^{-1}$.

Let $n,i \rightarrow \infty$ with $i/n = p$. Then if $p \in [\gamma_1,\gamma_2]$, the variable $u_{i}$ is asymptotically normal with mean $p$ and variance proportional to $1/n$ and therefore $u_{i} = p + O_p(n^{-1/2})$. The variable $\ell_i$, as a function of $u_i$, has first and second derivatives 0 and $p^{-1}(1-p)^{-1}$. Therefore, by Taylor expansion we find $\ell_i = 1 + (p-u_i)^2/[2 p (1-p)] + o_p(1/n)$ \citep[Prop. 6.1.5]{brockwell91}. Hence
\[(2n)^{1/2} [\ell_i-1]^{1/2}  = n^{1/2}|p-u_i|\left/\{p (1-p)\}^{1/2}\right. + o_p(1).\]
Let $S = \{i:u_i \leq i/n; \gamma_1 \leq u_i \leq \gamma_2\}$. Heuristically ignoring the influence of $o_p(1)$ terms (of which there are a growing number with $n$), assume that the limiting distribution of 
\begin{align*}
(2n)^{1/2}[T_{[\gamma_1, \gamma_2]}-1]^{1/2} &= \max\{(2n)^{1/2} [\ell_i-1]^{1/2}: i \in S\}
\end{align*}
is that of
\begin{equation*}
H_{[\gamma_1, \gamma_2]}=\max\left\{n^{1/2}(p-u_i)\left/\{p (1-p)\}^{1/2}\right.: i \in S\right\},
\end{equation*}
setting $H_{[\gamma_1, \gamma_2]}$ to zero when $S$ is empty. The absolute value was removed because $i \in S$ guarantees $u_i \leq p$. Replacing $p$ by $\hat{F}(y_i)$ and $u_i$ by $F(y_i)$  in the numerator of $H_{[\gamma_1, \gamma_2]}$, and replacing $p$ by $F(y_i)$ in the denominator (by the almost sure convergence of $F(y_i)$ to $p$), we find that $H_{[\gamma_1, \gamma_2]}$ is also the limiting random variable of $G^+_{[\gamma_1, \gamma_2]}$ if $\phi(x) = [x (1-x)]^{-1}$.

To prove this more rigorously we would need a better understanding of the joint behaviour of the $o_p(1)$ terms. The convergence of the two distributions is illustrated in Figure \ref{fig:asymp}, with $n=1000$ and $[\gamma_1, \gamma_2]=[.01, .99]$. By simulation we found that the fit seemed to deteriorate as $\gamma_1 \rightarrow 0, \gamma_2 \rightarrow 1$, and in the limit the result does not seem to hold. This is not necessarily surprising, for instance it is noted in \citet{chicheportiche12} that with $\gamma_1 = 1/(n+1)$ and $\gamma_2 = n/(n+1)$ the asymptotic distribution of a two-sided version of $G_{[\gamma_1, \gamma_2]}$ still depends on $n$, and in fact the asymptotic theory of weighted K-S tests generally relies on $\phi$ being bounded over the unit interval \citep[p.196]{anderson52}.

\section{Example: information flow in the Enron email corpus}
\label{sec:enron}
The Enron email corpus is a dataset that comprises emails sent and received by about 150 senior executives at the Enron Corporation, over the period 1998 to 2002. Although it is well-known to suffer from various integrity problems, it makes an attractive real data example because it is publicly available and many contemporary readers will be familiar with emailing behaviour. The dataset we analyse was downloaded from \verb+http://bailando.sims.berkeley.edu/enron_email.html+
 and reprocessed for our application. Only emails sent during the year 2001 were retained, because the record appears to be cleanest for that year. Some further effort was then needed to obtain reliable data. For example, many different email addresses can correspond to the same identity since an individual, John Smith say, could appear as any of john.smith, x..smith, jsmith $@$ either enron.com or ect.enron.com and more. 
Following \citet{perry13} we discarded emails sent to more than $5$ recipients, a subjectively chosen threshold that allows us to focus on inter-personal communications rather than company-wide announcements. 

Results will be presented for an individual, hereafter identified as $o$, who emailed frequently over the year, and for whom there are 12 individuals (of the 150 above) who contact $o$ and that $o$ contacts back. These are referred to by the identifiers $1,\dots,12$. 

Our example will seek to determine whether $i \rightarrow o$ triggers $o \rightarrow j$, denoted $i \rightarrow o \leadsto o \rightarrow j$ using only the timing of events. When $i=j$, a significant test is evidence of \emph{reciprocation} (or $o$ responding to emails), otherwise it suggests \emph{information flow}. The point processes generated by $i \rightarrow o$ and $o \rightarrow j$ replace $A$ and $B$ respectively in Section~\ref{sec:triggering}.

\begin{figure}[htp]
  \centering
  \includegraphics[width=8cm]{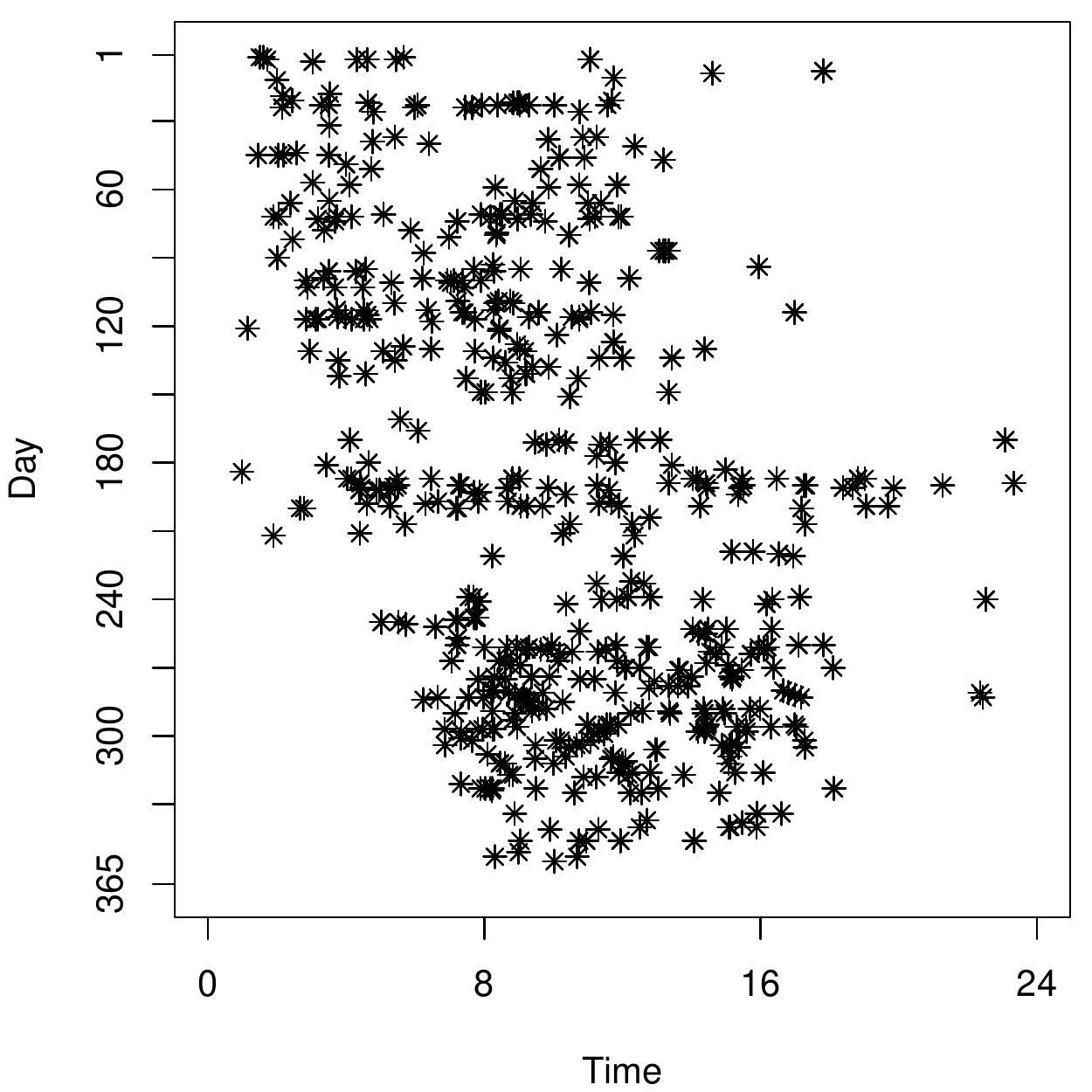}
\label{fig:pol}
\caption{Emailing behaviour of an individual in the Enron dataset. The stars show the event times, with the y-axis indicating the day and x-axis the time of day.}
\end{figure}
Figure~\ref{fig:pol} presents $o$'s sent email times, with the y-axis showing the day and x-axis the time of day of each event. This brings out a daily pattern in emailing behaviour; for example $o$ is markedly less active between the hours of 1600 and midnight (in some unknown time zone). On the other hand, looking vertically, there is some suggestion of behavioural change at a larger time-scale. For example there appears to be a busy period around the middle of year (specifically, July). 

A Bayesian model was implemented to estimate $o$'s intensity function over the pooled sample of all sent email times. This model attempts to capture the effects mentioned above, by jointly fitting multiple (wrapped) changepoints over the day and multiple changepoints over the year. Samples of the posterior intensity were computed, then standardized to integrate to one and then averaged. The resulting intensity is assumed to apply on every edge $o \rightarrow j, j = 1, \ldots, 12$, providing $r$ in \eqref{eq:model}. Details of the model and inference are given in the supplementary material. 

\begin{figure}[htp]
  \centering
  \includegraphics[width=8cm]{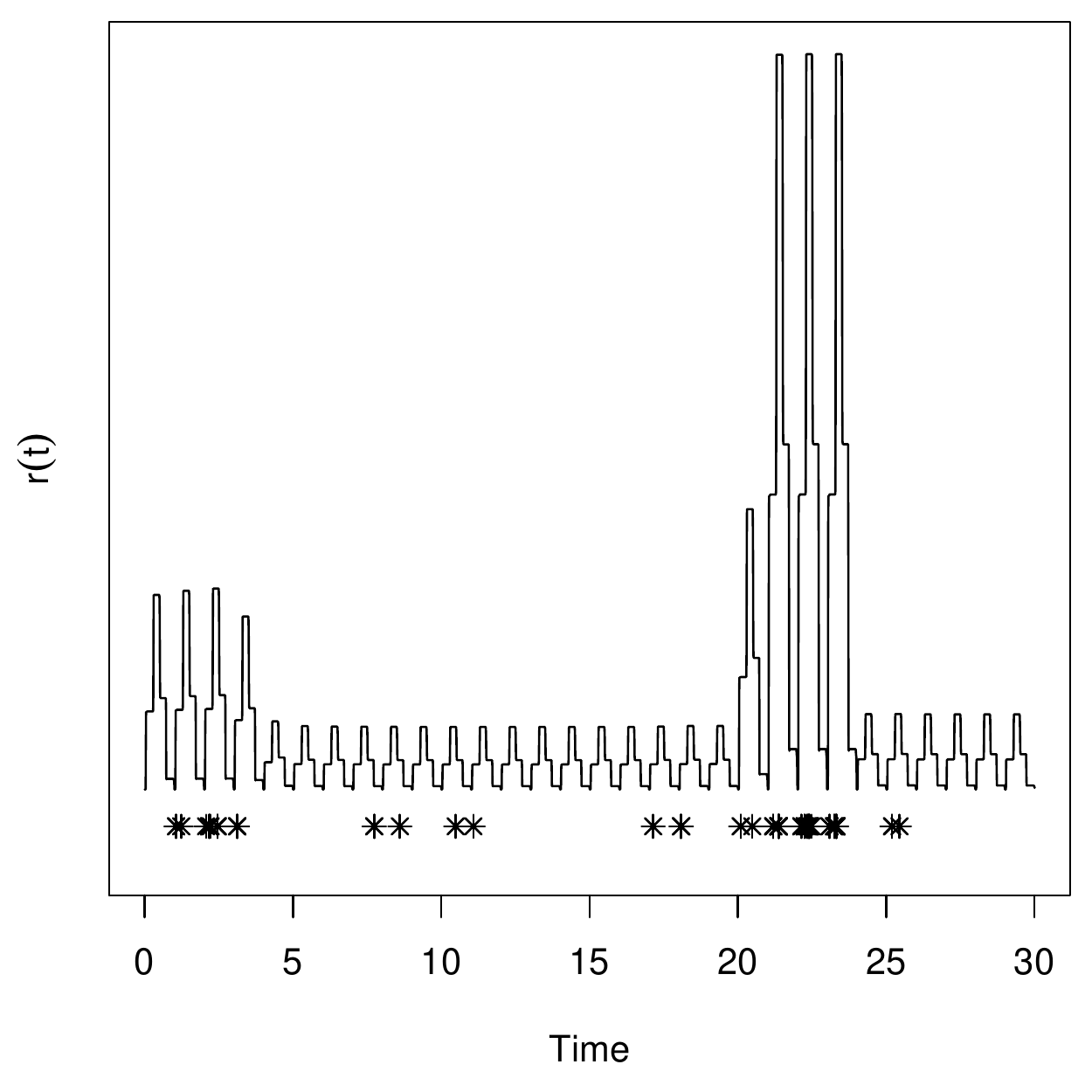}
  \caption{Emailing behaviour of an individual in the Enron dataset: fitted intensity }
  \label{fig:intensity_fit}
\end{figure}

Figure~\ref{fig:intensity_fit} illustrates our model fit to $o$'s emailing behaviour. The crosses denote event times, now on the $x$-axis. For visibility purposes only data from the first 30 days are presented. The line is $\hat r(t)$, which we will use in place of $r$ in \eqref{eq:model}. The model finds a unimodal daily pattern and, for instance, a period of high activity between the 20th and 25th of January.

\begin{figure}[htp]
  \centering
  \includegraphics[width=8cm]{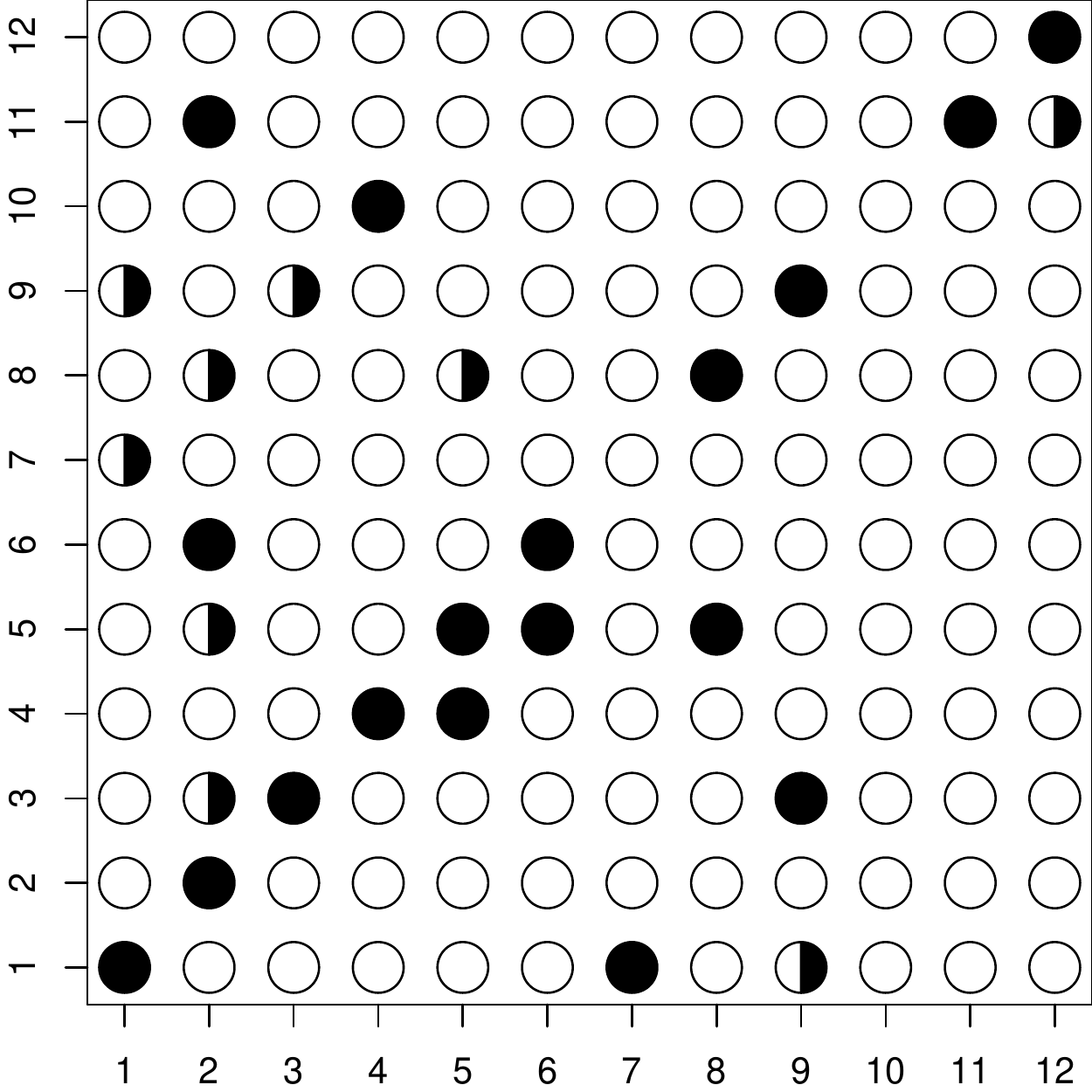}
  \caption{Evidence of $i \rightarrow o \leadsto o \rightarrow j$, for $i,j=1, \dots, 12$. The black circles indicate results retained by a false discovery rate set at 10\%, the half-circles $p$-values not retained by this criterion but still smaller than 0.05, while the white circles indicate a p-value greater than 0.05.}
\label{fig:JDft}
\end{figure}

In Figure~\ref{fig:JDft}, the results of testing $i \rightarrow o \leadsto o \rightarrow j$ are shown for $i,j=1, \dots, 12$, limiting the range of $\tau$ to $\tau_{\max}=\text{1 week}$, as discussed in Section~\ref{sec:time_limit}. The black circles are the p-values that are retained for analysis using a false discovery rate of $10\%$ \citep{benjamini95}. The half-circles are p-values less than $5\%$ and the white are the (not significant at the $5\%$ level) p-values that remain.

Most of the entries on the bottom-left to top-right diagonal are black, meaning that there is compelling evidence for reciprocation. Because reciprocation is largely to be expected, the two white circles on that diagonal warrant additional inspection: they indicate a lack of evidence for $o$ responding to emails from identities $7$ or $10$.
\begin{description}
\item[$7 \rightarrow o \not\leadsto o \rightarrow 7$:] There is only one email from $7$ to $o$ and one other from $o$ to $7$. They are sent about one month apart (and appear to be unrelated judging by their subject-lines). The p-value is automatically 1 because $\tau_{\max}$ was set to a week.
\item[$10 \rightarrow o \not\leadsto o \rightarrow 10$:] This example is more interesting. The $p$-value is only 0.28 despite there being 14 emails from $10$ to $o$ and 9 from $o$ to $10$, the most coincidental email times falling in July, about $3.5$ hours from each other. The reason why no effect is detected is in part because $o$ is estimated to be relatively busy in July, with $\rho(\text{July})\approx 0.13$ as opposed to the average $1/12 \approx 0.08$, meaning that we are less sensitive to coincidental timings during that month than at other times. In fact, upon inspecting the subject-lines of $10 \rightarrow o$ and $o \rightarrow 10$, it does appear as if $10$ and $o$ do not reciprocate. For example, the subject-lines of the two most coincidental emails are ``FW: Enron Complaint'' and ``Dunn hearing link?'', which are not obviously related. 
\end{description}

Consider now the p-values that were retained with a false discovery rate controlled at $10\%$. Table~\ref{tab:subject_lines} shows the subject-lines of the emails that the test based its decision on. More precisely, for each of the retained $i,j$ pairs, we find the closest two $i \rightarrow o$ and $o \rightarrow j$ events, subject to the former preceding the latter. Thus we have the `most triggering' email event $e$, and its subject-line is displayed first. Next, we display the subject-lines of all the emails from $o$ to $j$ that fall within $[e, e+\hat{\tau}]$, for the $\hat{\tau}$ used by the test. 

Table~\ref{tab:subject_lines} shows that the method succeeds in picking out `real' excitation periods. Consider for example the most significant detection $11 \rightarrow o \leadsto o \rightarrow 11$. Four emails fall within $\hat{\tau}$ of the `most triggering' email. These all have the subject-line ``RE: DWR - Gas Daily'' whereas the subject-line of the original email by identity 11 is ``DWR - Gas Daily''. Furthermore, the next email in the $o \rightarrow 11$ sequence, i.e., the first that is estimated not to be triggered, has a different subject-line ``RE: DWR and Edison Meetings''.

Although the method largely found evidence of reciprocation, there are some places where real information flow was identified. The email from identity $9$ to $o$ with subject ``California Update--Legislative Push Underway'' being followed by a string of emails from $o$ to $3$ with subject ``Re: California Update--Legislative Push Underway'' is a particularly compelling example. A concern could be that identity $3$ was simply `cc'ed' while  $o$ was responding to identity $9$. This is not the case: two of the four emails displayed, the second and the fourth, are sent from $o$ directly to $3$ with no other party involved. For reference, there were 28 emails in $[e, e+\hat{\tau}]$. Only those with the matching subject-lines are displayed. The vertical dots indicate the position of those omitted.

Of course, there are also a number of false positives in the results, notably the detection $5 \rightarrow o \leadsto o \rightarrow 4$. These are at least in part due to our use of an overly simple null model, which in particular fails to capture local bursts of activity prevalent in emailing behaviour.

\renewcommand{\arraystretch}{.8}
\begin{table}
  \caption{\label{tab:subject_lines}Subject-lines of sent emails that are estimated to be triggered. Further details in main text.}
\fbox{\scriptsize%
\begin{tabular}{c c r@{:}r@{:}l p{9cm}}
Pattern & P-value & \multicolumn{3}{c}{Time lag} & Subject\\ \hline
$ 11 \rightarrow o \leadsto o \rightarrow 11 $ & $ 9.6\times 10^{-12} $ &0&00&00&{\bf  DWR - Gas Daily }\\
 & & 0 & 00 & 58 & RE: DWR - Gas Daily\\
 & & 0 & 58 & 31 & RE: DWR - Gas Daily\\
 & & 1 & 06 & 49 & RE: DWR - Gas Daily\\
 & & 1 & 27 & 12 & RE: DWR - Gas Daily\\
$ 2 \rightarrow o \leadsto o \rightarrow 2 $ & $ 4.7\times 10^{-9} $ &0&00&00&{\bf  RE: CPUC Questions on DA }\\
 & & 0 & 00 & 19 & RE: CPUC Questions on DA\\
$ 4 \rightarrow o \leadsto o \rightarrow 4 $ & $ 4.9\times 10^{-8} $ &0&00&00&{\bf  RE: Transwestern Hearing }\\
 & & 0 & 14 & 00 & RE: Transwestern Hearing\\
 & & 5 & 01 & 00 & RE: Transwestern Hearing\\
$ 12 \rightarrow o \leadsto o \rightarrow 12 $ & $ 1.5\times 10^{-6} $ &0&00&00&{\bf  RE: CA Unbundling }\\
 & & 0 & 04 & 58 & RE: CA Unbundling\\
$ 9 \rightarrow o \leadsto o \rightarrow 3 $ & $ 4.6\times 10^{-5} $ &0&00&00&{\bf  California Update--Legislative Push Underway }\\
 & & 0 & 51 & 00 & Re: California Update--Legislative Push Underway\\
 & & 1 & 03 & 00 & Re: California Update--Legislative Push Underway\\[-.2cm]
 & &  \multicolumn{3}{c}{} & \vdots\\[-.4cm]
 & & 10 & 51 & 00 & Re: California Update--Legislative Push Underway\\
 & & 11 & 03 & 00 & Re: California Update--Legislative Push Underway\\[-.2cm]
 & &  \multicolumn{3}{c}{}  & \vdots\\[-.4cm]
$ 1 \rightarrow o \leadsto o \rightarrow 1 $ & $ 4.7\times 10^{-5} $ &0&00&00&{\bf  Re: Comments to Gov\'s Proposals }\\
 & & 0 & 02 & 00 & Re: Comments to Gov\'s Proposals\\
 & & 5 & 39 & 00 & RE: Additional Materials\\
 & & 21 & 37 & 00 & Update from EES Call this Morning\\
$ 3 \rightarrow o \leadsto o \rightarrow 3 $ & $ 9.3\times 10^{-5} $ &0&00&00&{\bf  Re: Pescetti }\\
 & & 0 & 03 & 00 & RE: Pescetti\\
$ 9 \rightarrow o \leadsto o \rightarrow 9 $ & $ 3.7\times 10^{-4} $ &0&00&00&{\bf  California Update--Legislative Push Underway }\\
 & & 0 & 51 & 00 & Re: California Update--Legislative Push Underway\\
 & & 10 & 51 & 00 & Re: California Update--Legislative Push Underway\\
$ 2 \rightarrow o \leadsto o \rightarrow 6 $ & $ 8.6\times 10^{-4} $ &0&00&00&{\bf  HERE IS MY DRAFT }\\
 & & 0 & 09 & 00 & Re: FW: SoCalGas Capacity Forum\\
$ 6 \rightarrow o \leadsto o \rightarrow 6 $ & $ 1.7\times 10^{-3} $ &0&00&00&{\bf  Re: FW: SoCalGas Capacity Forum }\\
 & & 2 & 22 & 00 & Re: FW: SoCalGas Capacity Forum\\
$ 5 \rightarrow o \leadsto o \rightarrow 5 $ & $ 2.1\times 10^{-3} $ &0&00&00&{\bf  Re: Response to ORA/TURN petition }\\
 & & 0 & 03 & 00 & Re: Response to ORA/TURN petition\\
$ 8 \rightarrow o \leadsto o \rightarrow 5 $ & $ 2.2\times 10^{-3} $ &0&00&00&{\bf  RE: Call to Discuss Possible Options to Mitigate Ef\dots }\\
 & & 1 & 18 & 29 & Re:\\
$ 4 \rightarrow o \leadsto o \rightarrow 10 $ & $ 2.4\times 10^{-3} $ &0&00&00&{\bf  RE: Transwestern Hearing }\\
 & & 1 & 31 & 13 & Attorneys\\
$ 8 \rightarrow o \leadsto o \rightarrow 8 $ & $ 3.8\times 10^{-3} $ &0&00&00&{\bf  RE: Call to Discuss Possible Options to Mitigate Ef\dots }\\
 & & 2 & 11 & 29 & RE: Call to Discuss Possible Options to Mitigate Ef\dots\\
$ 5 \rightarrow o \leadsto o \rightarrow 4 $ & $ 7.1\times 10^{-3} $ &0&00&00&{\bf  FW:  EPSA report }\\
 & & 41 & 33 & 00 & RE: Transwestern Hearing\\
 & & 46 & 20 & 00 & RE: Transwestern Hearing\\
$ 2 \rightarrow o \leadsto o \rightarrow 11 $ & $ 7.5\times 10^{-3} $ &0&00&00&{\bf  Willie Brown INFO }\\
 & & 0 & 03 & 48 & RE: Socal Storage Projects\\
$ 7 \rightarrow o \leadsto o \rightarrow 1 $ & $ 8.4\times 10^{-3} $ &0&00&00&{\bf  Governor Davis\' Press conference Highlights -- wil\dots }\\
 & & 4 & 23 & 00 & \\
 & & 5 & 06 & 00 & Email for Transmittal from Ken Lay to Senator Brult\dots\\
$ 6 \rightarrow o \leadsto o \rightarrow 5 $ & $ 9.1\times 10^{-3} $ &0&00&00&{\bf  Re: FW: SoCalGas Capacity Forum }\\
 & & 2 & 22 & 00 & Re: FW: SoCalGas Capacity Forum\\
\end{tabular}}
\end{table}

\section{Discussion}
The case where the null intensity of the tested process is unknown has only been treated heuristically, by replacing it with a posterior expectation. A more formal treatment might proceed as follows. Suppose that under the null hypothesis $r$ has a model with unknown parameters $\theta$ for which we can calculate a confidence set with coverage probability $1-\epsilon$, for some specified (small) $ \epsilon \in [0,1]$. This induces a confidence set $S$ on $r$ with the same coverage. Let $T^* = \sup\{T(s):s \in S\}$, where $T(s)$ is the test statistic that would be computed in Algorithm \ref{alg:lik} if $r$ was known to be $s$. Then by Bonferroni correction $\epsilon+\{1-F_n(T^*)\}$ provides a conservative p-value in the case where $r$ is unknown, see e.g. \citet{silvapulle1996test}. It would be interesting to investigate the choice of $\epsilon$ and the computation of $T^*$ for some generic models for $r$.

\section*{Supplementary material}
Supplementary material available at the authors' website includes a proof of the validity of Algorithm \ref{alg:lik}, an illustration of the difficulty of using a conditional intensity in place of $r$, a number of further extensions of the test to detect different forms of dependence, and further details on the Bayesian model and inference used in Section ~\ref{sec:enron}.
\appendix
\section*{Appendix}
\begin{proof}[Proof of Lemma~\ref{lem:uniform}]
 Notice that $K(y) =  \#\{b_i \in \Tr(y)\}$ defines a point process over  $[0,\max\{a_{i+1}-a_i: i = 1, \dots, m\})$, temporarily defining $a_{m+1} = L$. Its events are obtained from the event times of $B$ by superposing the segments $[a_i, a_{i+1})$, aligning to the left.

The intensity of $K(y)$ is therefore $h(y) =\sum_{\{t: t-a(t)=y\}} \lambda_{B}(t)$ and its compensator is
\[H(y) = \int_0^y h(s) ds = \begin{cases} \lambda_1\rho\{\Tr(y)\} & y \leq \tau, \\
\lambda_1\rho\{\Tr(\tau)\}+\lambda_2[\rho\{\Tr(y)\} - \rho\{\Tr(\tau)\}] & y > \tau. \end{cases}\]
If $F$ is a non-decreasing function on a sub-interval $X$ of $\R$ with image $Y$, we define $F^{-1}(y) = \inf\{x \in X: F(x) \geq y\}$ for $y \in Y$. Because $\rho$ is continuous, $\rho\{\Tr(y)\}$ is a continuous non-decreasing function $\mu$, say, of $y$. Then $\mu^{-1}$ is right-continuous and non-decreasing with jumps at an at most countable set of values of $x$ corresponding to intervals where $\mu$ is constant \citep[p.420]{daley07}. Now let $U(x) = K(\mu^{-1}(x))$ for $x \in [0,1)$. The $q$th event time of $U$ is 
\begin{align*}
\inf(x: U(x) \geq q)& = \inf(x: K\left(\mu^{-1}(x)\right) \geq q)\\
&=\inf\left\{x: \mu^{-1}(x) \geq k_q\right\}\\
& = \rho\left\{\Tr\left(k_q\right)\right\}\\
& = u_q,
\end{align*}
where $k_q$ is $q$th smallest response time, i.e., the $q$th event time of $K$. Using the well-known time change theorem \citep[p.421]{daley07},  $K(H^{-1}(z))$ is a Poisson process with rate 1 for $z \in [0, H(L))$. Since $U(x) = K(H^{-1}(\lambda_1 x))$ for $ x \in [0,\rho\{\Tr(\tau)\}]$, $U(x)$ is homogeneous Poisson with intensity $\lambda_1$ over that range. By a similar argument $U(x)$ has intensity $\lambda_2$ over $(\rho\{\Tr(\tau)\},1)$.
\end{proof}

\begin{proof}[Proof of Corollary~\ref{cor:consistence}]
Let $w=\rho\{\Tr(\tau)\}$. The conditions on $\tau$ guarantee that $0 < w < 1$, whereas $\lambda_1,\lambda_2 \rightarrow \infty$ guarantee $n \rightarrow \infty$. Conditional on $n$, the variables $u_1,\ldots,u_n$ in disorder are independent and identically distributed with density 
\begin{equation*}
d(x) = \begin{cases} \alpha & x \leq w, \\
\beta & x>w, \end{cases}
\end{equation*}
where $\alpha w + \beta (1-w)=1$ and $\alpha = c \beta$. \citet{chernoff1956estimation} showed that the (unconstrained) maximum likelihood estimator $\hat w$ for $w$ is consistent. Let $\hat l$ denote the number of $u_i \leq \hat w$. The maximum likelihood estimator for $\alpha$ is $\hat\alpha^* = \hat w \hat l /n$. This is consistent because $\hat \alpha^* = \hat w l/n \pm \hat w e /n$ where $l$ is the true number of $u_i \leq w$ and $e$ is the number of misclassifications. We have $e/n = o_p(1)$ by the consistency of $\hat w$ and $l/n = \alpha w + o_p(1)$, by the consistency of the maximum likelihood estimate of the Bernoulli parameter. By a similar argument the unconstrained maximum likelihood estimate for $\beta$ is also consistent. Therefore the constrained maximum likelihood estimates for $w,\alpha,\beta$ such that $\hat \alpha = \hat w \hat l/n \geq (1-\hat w) (n-\hat l)/n=\hat \beta$ are consistent, since the probability that the constrained and unconstrained versions disagree tends to zero. We then verify that $\hat w = \rho\{\Tr(\hat\tau)\}$, for the estimate of $\tau$ proposed in Algorithm~\ref{alg:lik}.

The condition on $r$ ensures that $\hat \tau$ is consistent when $\hat w$ is consistent, proving the first part of the claim. We also have
\begin{align*}
T&=\hat \alpha^{K(\hat{\tau})/n} \hat \beta^{(n-K(\hat\tau)/n}\\
&= \{\alpha+o_p(1)\}^{K(\tau)/n+o_p(1)} \{\beta+o_p(1)\}^{(n-K(\tau))/n+o_p(1)} \\
&= \alpha^{K(\tau)/n}\beta^{(n-K(\tau))/n}+o_p(1).
\end{align*}
Under $H_0$ the likelihood of $B$ conditional on $n$, $\ell_0(B\mid n)$, is constant. The non-vanishing part of the above is a monotonic function of the likelihood of $B$ conditional on $n$ under $H_1$, $\ell(B\mid n; \tau,\lambda_1,\lambda_2)$. Thus $T$ becomes a monotonic function of  $\ell(B\mid n; \tau,\lambda_1,\lambda_2)/\ell_0(B\mid n)$.
\end{proof}

\bibliographystyle{apalike} 
\bibliography{ppdependence}

\end{document}